\newtheorem{thm}{Theorem}[section]
\newtheorem{prop}[thm]{Proposition}
\theoremstyle{definition}
\newtheorem{ex}[thm]{Example}
\theoremstyle{definition}
\newtheorem{definition}[thm]{Definition}
\newtheorem{rem}[thm]{Remark}
\numberwithin{equation}{section}
\def \R { {\mathbb R} }
\def \E { {\mathbb E} }
  \DeclareMathOperator{\conv}{conv}
\title{Sequential Importance Sampling for Two-dimensional Ising Models}
\author{
Jing Xi \and Seth Sullivant}
\address{Department of Mathematics, North Carolina State University, Raleigh, NC, USA } 
\email{jxi2@ncsu.edu} 
\email{smsulli2@ncsu.edu } 
\date{}
\begin{document}

\maketitle                 

\begin{abstract}
In recent years, sequential importance sampling (SIS) has been well developed for sampling contingency tables with linear constraints. In this paper, we apply SIS procedure to 2-dimensional Ising models, which give observations of 0-1 tables and include both linear and quadratic constraints. We show how to compute bounds for specific cells by solving linear programming (LP) problems over cut polytopes to reduce rejections. The computational results, which includes both simulations and real data analysis, suggest that our method performs very well for sparse tables and when the 1's are spread out: the computational times are short, the acceptance rates are high, and if proper tests are used then in most cases our conclusions are theoretically reasonable.

\bigskip

\noindent {\bf Keywords:} sequential importance sampling, ising model, cut polytope

\end{abstract}

\section{Introduction}\label{sec:intro}

The Ising models, which were first defined by Lenz in 1920, find its applications in physics, neuroscience, agriculture and spatial statistics \cite{lee1952, hopfield1982, besag1978}. In 1989, Besag and Clifford showed how to carry out exact tests for the 2-dimensional Ising models via Monte Carlo Markov chain (MCMC) sampling \cite{besag1989}. However, their approach does not lead to a connected Markov chain, i.e. their Markov moves cannot connect all tables. Martin del Campo, Cepeda and Uhler developed a MCMC strategy for the Ising model that avoids computing a Markov basis. Instead, they build a Markov chain consisting only of swaps of two interior sites while allowing a bounded change of sufficient statistics \cite{abraham}.

Sequential importance sampling (SIS) is a sampling method 
which can be used to sample contingency tables with linear 
constraints \cite{chen2005}. It proceeds by sampling the table 
cell by cell with specific pre-determined marginal distributions, 
and terminates at the last cell. It was shown that compared 
with MCMC-based approach, SIS procedure does not require 
expensive or prohibitive pre-computations, and when there is 
no rejection, it is guaranteed to sample tables i.i.d.~from the proposal distribution, while in an MCMC approach the computational 
time needed for the chain to satisfy the independent condition is unknown.

In this paper, we describe an SIS procedure for the $2$-dimensional
Ising model.
We first define the 2-dimensional Ising models in Section \ref{sec:ising}, 
and give a brief review of SIS method in Section \ref{sec:sis}. In Section 
\ref{sec:cutpoly} we describe how to define cut polytopes for Ising models 
and how to compute bounds by solving LP problems on the cut polytope. 
Then in Section \ref{sec:comp} we give computational results from 
simulations of both Ising models and autologistic regression models, 
and real dataset analysis. Lastly, we end the paper with discussion and 
future questions on the SIS procedure for Ising models.


\section{Two-dimensional Ising models}\label{sec:ising}

Suppose $L$ is a grid (or table) with size $m\times n$, and a zero-one random variable $X_{i,j}$ is assigned as the $(i,j)_{th}$ entry of $L$. Then $X=(X_{1,1}, X_{1,2}, \ldots, X_{m,n})\in\{0,1\}^{mn}$ is a configuration of $L$. 
For sites $(i_1, j_1)$ and $(i_2, j_2)$ in $L$, we define $(i_1, j_1) \sim (i_2, j_2)$ if and only if $(i_1, j_1)$ and $(i_2, j_2)$ are nearest neighbors in the grid $L$.
The Ising model \cite[\S 4]{besag1989}  is defined as the 
set of all probability distributions on $\{0,1\}^{mn}$ of the form:
\begin{equation}\label{eq:ising}
Pr(X=x) = \frac{\exp (\alpha T_1(x) + \beta T_2(x))}{C(\alpha, \beta)},
\end{equation}
for some $\alpha, \beta \in \R$,
where $T_1(x)=\sum\limits_{(i,j)\in L} x_{i,j}$, $T_2(x)=\sum\limits_{(i_1, j_1) \sim (i_2, j_2)} (x_{i_1, j_1}(1-x_{i_2, j_2}) + x_{i_2, j_2}(1-x_{i_1, j_1}))$ and $C(\alpha, \beta)$ is the appropriate normalizing constant. Notice that $(T_1(X), T_2(X))$ are sufficient statistics of the model,
where $T_{1}(X)$ is the number of ones in the table,
and $T_{2}(X)$ is the number of adjacent pairs that are not equal.

For an observed table $x_0$ with $T_1=T_1(x_0)$ and $T_2=T_2(x_0)$, we can define the fiber as:
\begin{equation}\label{eq:suff}
\Sigma = \{x \in\{0,1\}^{mn} | T_1(x)=T_1,\ T_2(x)=T_2\}.
\end{equation}
The conditional distribution of $L$ given $(T_1, T_2)$ is the uniform distribution over $\Sigma$.

In \cite{besag1989}, Besag and Clifford were interested performing a hypothesis test
to see if there exist some unspecified $\alpha$ and $\beta$ such that the observed table $x_0$ is a realization of the Ising model defined in Equation \ref{eq:ising}. 
The test statistic they used was $u(x_0)$, the number of diagonally adjacent $1$'s in $L$, i.e. the number of $2\times 2$ windows of forms
$\begin{array}{|c|c|}                                                                                       
\hline                                                                                                       
1&0\\                                                                                                      
\hline                                                                                                       
0&1 \\                                                                                                     
\hline                                                                                                       
\end{array}$ or
$\begin{array}{|c|c|}                                                                                       
\hline                                                                                                       
0&1\\                                                                                                      
\hline                                                                                                       
1&0 \\                                                                                                     
\hline                                                                                                       
\end{array}$ in $L$.
To carry out the test, they estimated $\alpha$ and $\beta$ by maximum pseudo-likelihood and construct a Monte Carlo Markov chain to sample a series of tables $x_1, x_2, \ldots, x_N$. 
They estimate a one-sided p-value which is the percentage of $x_i$'s such that $u(x_i) > u(x_0)$ and the percentage of $x_i$'s such that $u(x_i) \geq u(x_0)$  and reject the Ising model if both of these two percentages were significant.


\section{Sequential importance sampling (SIS)}\label{sec:sis}

Let $p(X) = \frac{1}{|\Sigma|}$ be the uniform distribution over 
$\Sigma$, and $q(X)$ be a  distribution such that $q(X) >0$, $\forall\ X\in\Sigma$
that is ``easy'' to sample from directly, and such that 
the probability $q(X)$ is easy to calculate.
In importance sampling, 
 the distribution $q(\cdot)$ serves as a proposal distribution 
which we use to calculate p-values of the exact test.

Consider the conditional expected value $\mathbb E_{p}[f(X)]$ for some statistic $f(X)$, we have:
\[
\mathbb E_p[f(x)] = \int f(x)p(x)dx = \int [f(x) \frac{p(x)}{q(x)}]q(x) dx = \mathbb E_q[f(x)\frac{p(x)}{q(x)}].
\]
Since $p(\cdot)$ is known up to a constant $|\Sigma|$, i.e. $p(X) \propto p^*(X) = 1$, we also have:
\[
\mathbb E_p[f(x)] = \frac{\int f(x) p^*(x)dx}{\int p^*(x) dx} 
= \frac{\int f(x) \frac{p^*(x)}{q(x)} q(x)dx}{\int \frac{p^*(x)}{q(x)} q(x) dx}
= \mathbb E_q\left[f(x) \frac{\frac{p^*(x)}{q(x)}}{\int \frac{p^*(x)}{q(x)} q(x) dx}\right].
\]
Thus we can give an unbiased estimator of $\mathbb E_p[f(x)]$:
\begin{equation}\label{eq:imptest}
\hat{f_n} = \frac{\frac{1}{n}\sum\limits_{i=1}^n f(X_i)\frac{p^*(X_i)}{q(X_i)}}{\frac{1}{n}\sum\limits_{i=1}^n \frac{p^*(X_i)}{q(X_i)}} = \sum\limits_{i=1}^n f(X_i)\omega_s (X_i),
\end{equation}
where ${ X_1}, \ldots , { X_N}$ are tables drawn i.i.d. from $q({X})$, and $\omega_s (X_i)=\frac{\frac{p^*(X_i)}{q(X_i)}}{\sum\limits_{i=1}^n \frac{p^*(X_i)}{q(X_i)}}=\frac{\frac{1}{q(X_i)}}{\sum\limits_{i=1}^n \frac{1}{q(X_i)}}$ are called the standardized importance weights 
\cite{givens2005}.
In this case, the p-value Besag and Clifford suggested in \cite{besag1989} was the p-value for an exact test based on $u(X)$ (which is the same as a volume test \cite{diaconis1983}), is between $p_1=\mathbb E_{p}[\mathbb{I}_{u( X) > u({x_0})}]$ and $p_2=\mathbb E_{p}[\mathbb{I}_{u( X) \geq u({x_0})}]$, where $\mathbb{I}$ is the indicator function. Therefore, we need to estimate the values $p_1$ and $p_2$ by sampling tables over $\Sigma$ using the easy distribution $q$ and the importances
weights.

In sequential importance sampling (SIS) \cite{chen2005}, we vectorize the table as $x = (x_1, \ldots, x_{mn})$. Then by the multiplication rule 
for conditional probabilities we have
\[
q({X} = x) = q(x_1)q(x_2|x_1)q(x_3|x_2, x_1)\cdots
q(x_{mn}|x_{mn-1}, \ldots , x_1). 
\]
This implies that we can sample each table cell by cell with some pre-specified conditional distribution $q(x_{i} \mid x_{i-1}, \ldots, x_1)$.
Ideally, the distribution
$q(\cdot)$ should be as close to $p(\cdot)$ as possible. 
If the sample space $\Sigma$ is especially complicated,
ee may have rejections when we are sampling tables from a bigger set $\Sigma^*$ such that $\Sigma \subset \Sigma^*$. 
Since $q(x_i|x_{i-1}, \ldots ,x_1)$, $i = 2, 3, \ldots mn$, and $q(x_1)$ are normalized, it is easy to show that $q(X)$ is normalized over $\Sigma^*$, hence the estimators are still unbiased because:
\[
\E_q \left[{\mathbb{I}_{{ X} \in \Sigma}} f(X) \omega_s(X)\right] 
= \sum_{{ X} \in \Sigma^*} \left[{\mathbb{I}_{{ X} \in \Sigma}}  f(X) \frac{\frac{1}{q(X)}}{\int \frac{1}{q(X)} q(X) dx}\right]  q(X)
= \sum_{{ X} \in \Sigma^*}  {\mathbb{I}_{{ X} \in \Sigma}}  f(X) \frac{1} {|\Sigma|}
= \E_p [ f(X) ] ,
\]
where $\mathbb{I}_{{ X} \in \Sigma}$ is an indicator function for $\Sigma$. 

We can see that the proposal distribution $q({X} = x)$ heavily depend on how we define the distributions $q(x_1)$ and  $q(x_i|x_{i-1}, \ldots ,x_1)$, $i = 2, 3, \ldots mn$. However, the naive choice of the marginal distributions, $q(x_i=1|x_{i-1}, \ldots ,x_1) = \frac{T_1 - \sum_{j=1}^{i-1} x_j}{N-(i-1)}$, $i = 2, 3, \ldots mn$, and $q(x_1=1) = \frac{T_1}{N}$ works surprisingly bad: the rejection rate is very high even for small cases. Since in our case there is not a standard way to construct the marginal distributions, we build up the distributions based on $T_1$, $T_2$ and the neighbors by some combinatorial justifications. We will show in Section \ref{sec:comp} and \ref{sec:diss} that compare with the naive ones, our marginal distributions are better, but still have the problem of large $cv^2$ which need further work in this aspect.


\section{Linear programming (LP) problem for Ising models on the cut polytope}\label{sec:cutpoly}

As we discussed in the last section, choosing good proposal distributions
$q(x_{i}|x_{i-1}, \ldots ,x_1)$  is a crucial and typical difficulty in the SIS method.  For many models, a natural first step to improving the SIS procedure
is to reduce rejections as much as possible.  
The rejection rate in the simple SIS procedure described in Section \ref{sec:sis} varies according to how much the $1$'s are crowded, which can be measured by $T_2/ T_1$: note that the ratio $T_2/ T_1$ is always between 0 and 4, and it is smaller when $1$'s are more crowded together. Computational results show that we have higher rejection rates when $T_2/ T_1$ is smaller.
 
One straightforward way to reduce rejection is enumerating some special cases
when only a few $1$'s remained to be placed using combinatorial techniques.  
However, only very limited things can be done in this way because of 
the massive number of possible ways of taking values for the table.
In this section, we will show how to use cut polytopes in the 
procedure to reduce the rejection rate: we compute the lower bounds 
and upper bounds for some cells in the table to avoid some cases that will certainly lead to rejections, and this essentially 
narrows the sample space $\Sigma^*$ down to a smaller set which 
contains the fiber $\Sigma$.

Let $G=(V,E)$ be a graph with vertex set $V$ and edge set $E$.
\begin{definition}\label{def:cut}\cite[\S 4.1]{deza1997}
Suppose $A \mid B$ is a partition of $V$. Then the cut semimetric, 
$\delta_{A \mid B}$, is a vector such that for all $ ij \in E$,
\[
\delta_{A\mid B}(i,j) = \begin{cases} 1 & \text{if } \mid \{i,j\}\cap A \mid = 1 \\0 & \text{otherwise} \end{cases}.
\] 
The cut polytope is the convex hull of the cut semimetrics of all
 possible partitions of $V$:
\[
CUT^\Box (G) = \conv\{\delta_{A \mid B}: A\mid B \mbox{ is a partition of } G\}.
\]
\end{definition}

\begin{ex}
Let $G$ be the graph of Figure \ref{fig:cut}. 
Consider a partition of $V$: $A=\{1,2,5\}$ and $B=\{3,4,6\}$, then the cut semimetric is (see Figure \ref{fig:cut}, edges with coordinates $1$ are bolded):
\[
\delta_{A\mid B} = \bordermatrix{ e: e\in E & 12 & 23 & 45 & 56 & 14 & 25 & 36 \cr
& 0 & 1 & 1 & 1 & 1 & 0 & 0  
} \in CUT^\Box (G).
\]
It is also easy to see that $(0,1,1,1,0,0,0)$ is not in $CUT^\Box (G)$,
since the only $0/1$ vectors in $CUT^{\Box}(G)$ are the cut
semimetrics, and any cut semimetric has an even number of $1$'s around
a cycle.
\begin{figure}[!htp]
\begin{center}
\scalebox{1.5}{
\includegraphics{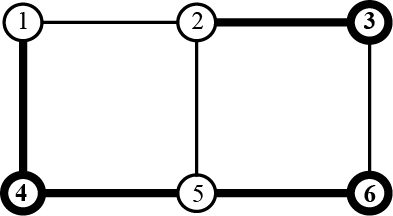}
}
\end{center}
\caption{An example of a graph with $V$=\{1, 2, 3, 4, 5, 6\}.}
\label{fig:cut}
\end{figure}
\end{ex}

Cut polytopes are well-studied objects (see the book \cite{deza1997}
for a great many results about them).  Here are some examples
that we make use of in our calculations related to the
Ising model.

\begin{ex}\cite{deza1997}\label{ex:tri}
The cut polytope of a triangle in Figure \ref{fig:tri} can be defined by the following inequalities:
$0 \leq x_e \leq 1$ for $e=a,b,c$, and:
$$ \begin{array}{rrrrrcc}
x_a & + & x_b & + & x_c & \leq & 2 \\
x_a & + & x_b & - & x_c & \geq & 0 \\
x_a & - & x_b & + & x_c & \geq & 0 \\
-x_a & + & x_b & + & x_c & \geq & 0
\end{array}
$$
\begin{figure}[ht]
  \centering
  \subfigure[A triangle]{
    \label{fig:tri}     
    \includegraphics[width=1.2in]{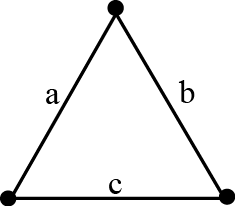}}
  \hspace{0.5in}
  \subfigure[A square]{
    \label{fig:sq}     
    \includegraphics[width=1.2in]{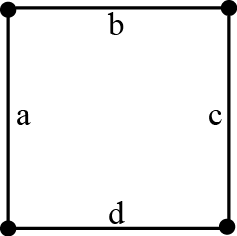}}
  \caption{Graphs to illustrate the inequalities for cut polytopes}
  \label{fig:cutpoly}     
{\footnotesize }
\end{figure}
\end{ex}

\begin{ex}\cite{deza1997}\label{ex:sq}
The cut polytope of a square in Figure \ref{fig:sq} can be defined by the following inequalities:
$0 \leq x_e \leq 1$ for $e=a,b,c,d$, and:
$$ \begin{array}{ccrrrrrrrcc}
0 & \leq & x_a & + & x_b & + & x_c & - & x_d & \leq & 2 \\
0 & \leq & x_a & + & x_b & - & x_c & + & x_d & \leq & 2 \\
0 & \leq & x_a & - & x_b & + & x_c & + & x_d & \leq & 2 \\
0 & \leq & -x_a & + & x_b & + & x_c & + & x_d & \leq & 2 
\end{array}
$$
\end{ex}

%
%

We now explain how we can use the cut polytope
to gain information about bounds on individual cell
entries in the table.
Let $Gr_{m,n}$ denote the $m \times n$ grid graph with vertices $v_{1,1}, v_{1,2}, \ldots, v_{m,n}$ and
let $\hat{Gr}_{m,n}$ denote the suspension of $Gr_{m,n}$
obtained from $Gr_{m,n}$ by adding a single vertex
$w$ that is connected to all the other vertices.

\begin{prop}\label{prop:cutising}
Consider the graph $\hat{Gr}_{m,n}$ with associated cut
polytope $CUT^\Box (\hat{Gr}_{m,n}) \subseteq \R^{|E_{1}| + |E_{2}|}$,
where $E_{1}$ denotes the set of edges connected to the vertex $w$ and
$E_{2}$ denotes the set of edges in $Gr_{m,n}$.
Let 
\begin{eqnarray*}
\Sigma' &   =  &   CUT^\Box (\hat{Gr}_{m,n}) \, \,  \bigcap \, \,  \{0,1\}^{|E_{1}| + |E_{2}|}  \\
        &      &   \bigcap  \left\{\mathbf x \in \R^{|E_{1}| + |E_{2}|} \, : \,  \sum\limits_{e\in E_1}  x_e = T_1 \, \, \,  \sum\limits_{e\in E_2}  x_e = T_2 \right\}. \\
\end{eqnarray*}
Let $\pi:  \R^{|E_{1}| + |E_{2}|}  \rightarrow \R^{|E_{1}|}$ be the coordinate
projection onto the first block of coordinates.  Then $\pi(\Sigma')  = \Sigma$.
\end{prop}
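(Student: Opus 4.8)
The plan is to exhibit an explicit correspondence between $0/1$ configurations of the grid and the integer points of $CUT^\Box(\hat{Gr}_{m,n})$, under which the projection $\pi$ recovers the table and the two linear constraints become exactly the conditions $T_1(x)=T_1$ and $T_2(x)=T_2$. The one ingredient I will take for granted is the fact recorded in the example above: the only $0/1$ vectors lying in a cut polytope are the cut semimetrics $\delta_{A \mid B}$ themselves. I would prove the set equality by establishing the two inclusions separately.

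For the inclusion $\Sigma \subseteq \pi(\Sigma')$, I would start from a table $x \in \Sigma$ and build the partition of the vertex set of $\hat{Gr}_{m,n}$ given by $A = \{v_{i,j} : x_{i,j}=1\}$ and $B = \{v_{i,j} : x_{i,j}=0\} \cup \{w\}$; that is, I place the suspension vertex $w$ together with the zero cells. I then read off the cut semimetric $\delta_{A \mid B}$ edge by edge. On an edge $e \in E_{1}$ joining $w$ to $v_{i,j}$, the edge is cut precisely when $x_{i,j}=1$, so the $E_{1}$-block of $\delta_{A \mid B}$ is exactly $x$; this is the whole point of adding $w$, since it lets the entries $x_{i,j}$ be recovered directly rather than only their differences. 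On an edge $e \in E_{2}$ joining adjacent cells $v_{i_1,j_1} \sim v_{i_2,j_2}$, the edge is cut exactly when $x_{i_1,j_1}\neq x_{i_2,j_2}$, which matches the corresponding summand of $T_2$. Summing then gives $\sum_{e\in E_1}(\delta_{A \mid B})_e = T_1(x) = T_1$ and $\sum_{e\in E_2}(\delta_{A \mid B})_e = T_2(x) = T_2$, so $\delta_{A \mid B} \in \Sigma'$ and $\pi(\delta_{A \mid B}) = x$.

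For the reverse inclusion $\pi(\Sigma') \subseteq \Sigma$, I would take a point $\mathbf x \in \Sigma'$. Being a $0/1$ vector of the cut polytope, it equals $\delta_{A \mid B}$ for some partition $A \mid B$; after possibly interchanging $A$ and $B$ (which leaves the semimetric unchanged) I may assume $w \in B$. Defining $y_{i,j}=1$ if $v_{i,j}\in A$ and $y_{i,j}=0$ otherwise, the same edge-by-edge reading as above shows $\pi(\mathbf x) = y$, and the defining equations $\sum_{e\in E_1} x_e = T_1$ and $\sum_{e\in E_2} x_e = T_2$ translate into $T_1(y)=T_1$ and $T_2(y)=T_2$, so $y \in \Sigma$.

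I expect the two inclusions to be essentially symmetric, with the only genuine subtlety being the normalization fixing $w \in B$. Without the suspension vertex, a table and its bitwise complement would induce the same cut on the grid, and the map would collapse two distinct configurations to a single point; pinning $w$ to a fixed side breaks this ambiguity and turns $x \mapsto \delta_{A \mid B}$ into a bijection. Thus the main thing to verify carefully is that this choice is consistent, and that the cited characterization of $0/1$ points of the cut polytope is exactly what forces an arbitrary feasible integer point to be a bona fide cut semimetric rather than merely a vector satisfying the linear constraints.
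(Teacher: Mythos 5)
Your proof is correct and follows essentially the same route as the paper's: both directions are established by the explicit correspondence between tables and cut semimetrics of $\hat{Gr}_{m,n}$ with $w$ pinned to the side of the zero cells (your $A$/$B$ labels are merely swapped relative to the paper's), using the fact that the $0/1$ points of the cut polytope are exactly the cut semimetrics. If anything, your write-up is slightly more careful than the paper's in spelling out the edge-by-edge verification that the $E_1$-block recovers the table and the $E_2$-block counts disagreeing adjacent pairs.
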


\begin{proof}
Since $\pi$ is the coordinate projection, for all $ (y_1, y_2, \ldots, y_{|E_{1}| + |E_{2}|}) \in \R^{|E_{1}| + |E_{2}|}$, $\pi (y_1, \ldots, y_{|E_{1}| + |E_{2}|}) = (y_{1}, \ldots, y_{|E_{1}| })$. For all $\bf x \in \Sigma'$, $\bf x$ is a 0-1 vector in the 0-1 polytope $CUT^\Box (\hat{Gr}_{m,n})$, so there exists a partition of $\hat{Gr}_{m,n}$, $A \mid B$, such that $\mathbf {x} = \delta_{A \mid B}$. Now assume that $w \in A$, and use the partition restricted on ${Gr}_{m,n}$ to define a $m\times n$ table: $x_{i,j} = 1$ if $v_{i,j} \in B$, and 0 if $v_{i,j} \in A$. Then this table $x$ will be identical to $\pi(\bf x)$. Because $\sum\limits_{e\in E_1}  x_e = |B| = T_1$ implies $T_1(x) =T_1$, and $ \sum\limits_{e\in E_2}  x_e = T_2$ implies $T_2(x) = T_2$, we have $\pi(\mathbf{x}) = x \in \Sigma$. $\pi(\Sigma')  \subseteq \Sigma$.

Similarly, given a table $x\in \Sigma$, consider a partition of $\hat{Gr}_{m,n}$, $A \mid B$: $A = \{v_{i,j}: x_{i,j} = 0\}\cup \{w\}$, $B= \{v_{i,j}: x_{i,j} = 1\}$. Let $\mathbf x$ be the resulting cut semimetric. Because $T_1 = T_1(x) = |B| = \sum\limits_{e\in E_1}  x_e$ and $T_2 = T_2(x) = \sum\limits_{e\in E_2}  x_e$, we have $\mathbf x \in \Sigma'$. So $x \in \pi(\Sigma')$. $\Sigma \subseteq \pi(\Sigma')$.
\end{proof}

We apply Proposition \ref{prop:cutising} to
construct linear programs for approximating upper and 
lower bounds on array entries in the Ising model, given 
constrains on certain entries already being filled in the
table.  

To obtain the bounds for a specific cell in the table, we need to optimize a proper target function over $\Sigma$. But since $\Sigma$ contains exponentially many tables, the brute-force search over $\Sigma$ will be very time consuming and not realistic in practice. In fact, this problem is the optimization version of the max-cut problem, which is known to be NP-hard \cite[\S 4.1 and \S 4.4]{deza1997}.
However, Proposition \ref{prop:cutising} implies that we can optimize over $\Sigma'$ instead of $\Sigma$, and the optimal solution over $\Sigma$, i.e. the table in which the value of the specific cell equals to the lower or upper bound, can be uniquely determined by the optimal solution over $\Sigma'$. The advantage of doing this is that the quadratic constraint $T_2(x) = T_2$, which is also the only non-linear constraint in $\Sigma$, is transformed to a linear constraint in $\Sigma'$. Notice that $\Sigma'$ is the set of all lattice points in the polytope defined by intersecting the $CUT^\Box (\hat{Gr}_{m,n})$ with two hyperplanes: one for $T_1$ and one for $T_2$. Therefore, to compute the exact bound of the specific cell, we should solve an integer programming (IP) problem subject to all inequalities that define $CUT^\Box (\hat{Gr}_{m,n})$ and two equations for $T_1$ and $T_2$.

Our difficulty in realizing this idea is that we don't know all inequalities that define $CUT^\Box (\hat{Gr}_{m,n})$, so we define our polytope only using part of them: inequalities for triangles including $w$ and $v_{i_1,j_1}$, $v_{i_2, j_2} \in {Gr}_{m,n}$ such that $e = v_{i_1,j_1} v_{i_2, j_2} \in E_2$ (see Example \ref{ex:tri}); inequalities for squares including four vertices that form a smallest square in $ {Gr}_{m,n}$ (see Example \ref{ex:sq}); equations for $T_1$ and $T_2$ (see the definition of $\Sigma'$ in Proposition \ref{prop:cutising}); and equations for known edges if some cells in the table are already sampled in previous steps.  Then by
using a proper target vector, we can obtain the exact lower and upper bounds of a specific cell by solving IP problems, or approximate them by solving linear programming (LP) problems.

\begin{rem}
Sometimes the bounds computed in the above steps are not exact. There are two reasons. First, we only include some but not all inequalities that define the cut polytope, i.e. the feasible region we use in the IP / LP problems may be larger than the it should be. Second, Proposition \ref{prop:cutising} suggests that we should solve IP problems to get a 0-1 vector. But solving IP problems is usually much slower than solving LP problems, so when we deal with large table, we choose to solve LP problems but not IP problems, and this means we can not guarantee that the resulting bounds correspond to a lattice point. In a word, we can only estimate the bounds but not computing them exactly, so we will still have rejections in our procedure.
\end{rem}

\begin{rem}
Notice that for an originally $m\times n$ table, we need to solve a IP/LP problem with $3mn + 3(m+n)+4$ variables, $36mn+26(m+n)+16$ inequalities and more than $4(m+n)+8$ equalities to compute the bounds of one single cell. 
For large tables, it is both not realistic and not necessary to compute bounds for all cells in the table, even with LP problems. Therefore in our procedure, we only compute bounds when the number of unknown cells is small and also the ratio of the number of edges still need to be added to the number of 1's still need to be added is relatively small.
\end{rem}


\section{Computational results}\label{sec:comp}

\subsection{Simulation results under Ising models}\label{subsec:simu1}

We use the software package {\tt R} \cite{Rproj} in our simulation study.
To measure the accuracy in the estimated p-values, we use the coefficient of variation ($cv^2$) suggested by \cite{chen2005}:
\[
cv^2 = \frac{var_q\{p({ X})/q({ X})\}}{\E^2_q\{p({ X})/q({ X})\}} = \frac{var_q\{1/q({ X})\}}{\E^2_q\{1/q({ X})\}}.
\]
The value of $cv^2$ can be interpreted as the chi-square distance between the 
two distributions $p(\cdot)$ and $q(\cdot)$, which means the smaller it is, the closer the two
distributions are.
In the mean time, we can also measure the efficiency of the sampling method by using the effective sample size (ESS) $ESS = \frac{N}{1+cv^2}$.

For the following examples, we first sample an ``observed table",
from an underlying true distribution.  The underlying true distribution
is taken to be the Ising model in some instances, and more complicated
models in others.  The observed table is sampled using a Gibbs sample.
Then we estimate $p_1$ and $p_2$ via the SIS procedure outlined in the
previous section. 
When the observed tables are actually generated from the Ising model, we expect non-significant p-values in these examples.

The algorithm for the Gibbs sampler is given in the following steps:
\begin{enumerate}
\item
Fix $\alpha$, $\beta$, and the initial table $X^0$. Let $k=0$.
\item
For $i=1,\ldots, mn$, update $X^{k+1}_i \sim$ Bernoulli distribution with\\
$P(X^{k+1}_i=1 \mid X^{k+1}_1, \ldots X^{k+1}_{i-1}, X^k_{i+1}, \ldots X^k_{mn}) = \frac{e^{c_i^k}}{e^{c_i^k}+1}$, where $c_i^{k+1} = \alpha + \beta |j: i\sim j|-2\beta \sum\limits_{j: j\sim i} X^k_j$ (or $X^{k+1}_j$ if $j<i$). The resulting table is $X^{k+1}$.
\item
repeat step 2 and take $X^{1001}$ as our sample.
\end{enumerate}

\begin{ex}
$10\times 10$, $\alpha=-2$, $\beta=0.1$, $N=5000$, $\delta$: acceptance rate. Computational time: 28 sec - 40 sec.
\begin{center}
\begin{tabular}{cccrrrr}
\toprule[1.2pt] %
$T_1$ & $T_2$ & $u(x_0)$ & $p_1$ & $p_2$ & $\delta$ & ESS \\\hline
$11$ & $40$ & 1 & 0.4712 & 0.8083 & $94.9\%$ & 235.6 \\
$17$ & $58$ & 5 & 0.0505 & 0.2854 & $90.8\%$ & 18.5  \\
$13$ & $48$ & 3 & 0.2373 & 0.4657 & $93.0\%$ & 129.2 \\
$18$ & $60$ & 4 & 0.3367 & 0.5054 & $90.5\%$ & 14.9 \\
$20$ & $72$ & 9 & 0.0374 & 0.0883 & $88.9\%$ & 127.4 \\\bottomrule[1.2pt]
\end{tabular}
\end{center}
\end{ex}

\begin{ex}
$20\times 20$, $\alpha=-3$, $\beta=0.1$, $N=5000$, $\delta$: acceptance rate.
Computational time: 102 sec - 124 sec.
\begin{center}
\begin{tabular}{cccrrrr}
\toprule[1.2pt] %
$T_1$ & $T_2$ & $u(x_0)$ & $p_1$ & $p_2$ & $\delta$ & ESS \\\hline
$28$ & $108$ & 2 & 0.7800 & 0.9078 & $96.4\%$ & 4.13 \\
$24$ & $92$   & 2 & 0.4845 & 0.6224 & $96.7\%$ & 14.03  \\
$27$ & $100$ & 1 & 0.8018 & 0.9281 & $95.5\%$ & 16.09 \\
$21$ & $82$   & 2 & 0.3093 & 0.4888 & $97.1\%$ & 47.67 \\
$28$ & $110$ & 6 & 0.0355 & 0.0738 & $95.9\%$ & 27.57 \\\bottomrule[1.2pt]
\end{tabular}
\end{center}
\end{ex}

\begin{ex}
$50\times 50$, $\alpha=-3.6$, $\beta=0.1$, $N=5000$, $\delta$: acceptance rate. Computational time: 647 sec - 661 sec.
\begin{center}
\begin{tabular}{cccrrrr}
\toprule[1.2pt] %
$T_1$ & $T_2$ & $u(x_0)$ & $p_1$ & $p_2$ & $\delta$ & ESS \\\hline
$79$   & $304$   & 2  & 0.6138    & 0.9566 & $98.9\%$ & 7.92 \\
$103$ & $396$   & 4  & 0.9906    & 0.9998 & $98.0\%$ & 5.92  \\
$95$   & $372$   & 8  & 0.1341    & 0.6140 & $98.0\%$ & 3.79 \\
$101$ & $392$   & 11 & $<$0.0001 & 0.0006 & $98.3\%$ & 1.01 \\
$97$   & $370$   & 11 & 0.0189  & 0.0204 & $98.5\%$ & 1.77 \\\bottomrule[1.2pt]
\end{tabular}
\end{center}
\end{ex}

\begin{ex}
$100\times 100$, $\alpha=-2.4$, $\beta=0.1$, $N=5000$, $\delta$: acceptance rate.
Computational time: $\approx$ 2800 sec (47 mins). $p_1 \approx p_2 \approx 1$.
\begin{center}
\begin{tabular}{cccrr}
\toprule[1.2pt] %
$T_1$ & $T_2$ & $u(x_0)$ & $\delta$ & ESS \\\hline
$1127$ & $4096$ & 212 & $97.7\%$ & 1.79 \\
$1121$ & $4064$ & 194 & $97.8\%$ & 2.66  \\
$1104$ & $3962$ & 179 & $97.8\%$ & 1.19 \\
$1149$ & $4112$ & 213 & $97.5\%$ & 2.81 \\
$1117$ & $4030$ & 203 & $97.9\%$ & 1.05 \\\bottomrule[1.2pt]
\end{tabular}
\end{center}
\end{ex}

\subsection{Simulation results under the second-order autologistic regression models}\label{subsec:simu2}

The simulations in this subsection is very similar with Subsection \ref{subsec:simu1}, but the observed tables are generated from the second order autologistic regression models \cite{he2003}. In \cite{he2003}, He et al defines the model as following:
\[
P(X_{i,j}=1 \mid \mbox{all other cells} ) = \frac{e^{c_{i,j}}}{1+ e^{c_{i,j}}},
\]
where $c_{i,j} = \beta_0 + \beta_1 T_{i,j}^{(1)}(X) + \beta_2T_{i,j}^{(2)}(X) + \beta_3 T_{i,j}^{(3)}(X) + \beta_4 T_{i,j}^{(4)}(X) $, $T_{i,j}^{(1)}(X) = X_{i,j-1}+X_{i,j+1}$, $T_{i,j}^{(2)}(X) = X_{i-1,j}+X_{i+1,j}$, $T_{i,j}^{(3)}(X) = X_{i-1,j-1}+X_{i+1,j+1}$, and $T_{i,j}^{(4)}(X) = X_{i-1,j+1}+X_{i+1,j-1}$. 
Similarly to Subsection \ref{subsec:simu1} we can also use Gibbs sampler to generate 
``observed table" from this model and compute its
 $T_1$ and $T_2$. Notice that the Ising models are special
  cases of the second-order autologistic regression models when 
  $\beta_1 = \beta_2$, $\beta_3 = \beta_4 = 0$. 
  Therefore, in the following examples, we expect significant p-values, and theoretically it should be harder to distinguish this model 
  from the Ising model when $\beta_3$ and $\beta_4$ are both small.
  
  To compare with the test statistic $u(\cdot)$ suggested by Besag and Clifford \cite{besag1989}, we will also include a new test statistic suggested by Martin del Campo et al \cite{abraham}: the number of $2\times 2$ windows with adjacent pairs, i.e. $u'(x)$ is the number of $2\times 2$ windows of form
  $\begin{array}{|c|c|}                                                                                       
\hline                                                                                                       
0&0\\                                                                                                      
\hline                                                                                                       
1&1 \\                                                                                                     
\hline                                                                                                       
\end{array}$ in $x$. The corresponding p-values, $p'_1$ and $p'_2$, are similarly defined to $p_1$ and $p_2$.

\begin{ex}
$\beta_0=-2$, $\beta_1=0.2$, $\beta_2=-0.2$, $\beta_3=0.2$, $\beta_4=-0.2$ and $N=5000$. The acceptance rates $\delta$, ESS and computational times won't be listed since they are similar with those in Subsection \ref{subsec:simu1}. 
\begin{center}
\begin{tabular}{ccccrrrrr}
\toprule[1.2pt] %
$m\times n$ & $T_1$ & $T_2$ & $u(x_0)$ & $p_1$ & $p_2$ & $u'(x_0)$ & $p'_1$ & $p'_2$ \\\hline
$10 \times 10$ & $12$  & $42$   & 2  & 0.1887    & 0.4207 & 5 & 0.2301 & 0.4421 \\
$10 \times 10$ & $17$ & $50$   & 3  & 0.0512    & 0.4447 & 8 & 0.3824 & 0.6721 \\
$20 \times 20$ & $53$   & $182$   & 12  & 0.1563   & 0.1956 & 26 & 0.0044 & 0.0297 \\
$20 \times 20$ & $35$   & $130$   & 4  & 0.2424    & 0.6956 & 8 & 0.4682 & 0.9245 \\
$20 \times 20$ & $39$   & $138$   & 2  & 0.9489    & 0.9973 & 14 & 0.2472 & 0.3350 \\
$50 \times 50$ & $326$ & $1154$   & 73 & 0.4499 & 0.9798 & 121 & $<$0.0001 & $<$0.0001 \\
$50 \times 50$ & $298$   & $1044$   & 46 & 0.9884 & 0.9884 & 116 & 0.0008 & 0.0014 \\\bottomrule[1.2pt]
\end{tabular}
\end{center}
The test statistic $u(x)$ failed to reject in all tables, while $u'(x)$ was able to reject some of them.
\end{ex}

\subsection{Real data analysis}\label{subsec:real}
In \cite{besag1978}, Besag published his endive data of size $m=14$ and $n=179$ \cite{friel2004}. In this spread of $14\times 179$ lattices of plants, diseased plants were recorded as 1's and others were 0's. A longitudinal dataset with time-points 0, 4, 8 and 12 weeks was available, and it is well known that Ising model gives very poor fit to the data at 12 weeks \cite{friel2004}.
We applied our method to this data with the following statistics: $T_1 = 385$, $T_2 = 1108$, $u(x_0)=61$. 
$10000$ samples were generated in 1689 sec (28 min). The acceptance rate was $96.3$ and the p-value was between $8.33e$-$5$ and $0.0010$ (ESS was $1.07$). This means we agree with the former research and also reject the Ising model for this data.

The largest real dataset example we have computed is a $800 \times 800$ sparse example with ratio $T_2/T_1$ close to 4. We obtained $5000$ samples in 56 hours, the acceptance rate was $99.6\%$.


\section{Discussion and future work}\label{sec:diss}

We have introduced an efficient SIS procedure which samples two-way 0-1 tables from Ising models. This procedure can be used to carry out conditional exact tests and volume tests. We have also described how to use cut polytopes to reduce rejections in the procedure.
Computational results show that our method performs very well when the table is sparse and 1's are spread out (i.e. the ratio $T_2/T_1$ is relatively large): the acceptance rates are very high, and the computational times are very short.
The method is still fast for the opposite situations, however, the acceptance rates will be much lower. One straightforward way is that we may choose to compute bounds for more cells in the table, but at the price of worse time complexity.

We also observe that the $cv^2$ in our examples are large when the size of table is large, and this can hardly be improved by increasing the sample size. The reason is that our proposal distribution $q(\cdot)$ is not close enough to the true distribution $p(\cdot)$.  A major research problem is to try to   find a better approach of choosing proper marginal distributions $q(x_{i} \mid x_{i-1}, \ldots, x_1)$ (see Section \ref{sec:sis}) so that $p(\cdot)$ and $q(\cdot)$ are closer.

It is also important to choose good test statistics. Simulations in Subsections \ref{subsec:simu1} and \ref{subsec:simu2} suggest that Besag and Clifford's test statistic $u(\cdot)$ gives a small type I error,  but also has a poor power to detect the second-order autologistic regression models, while the other test statistic $u'(\cdot)$ suggested by \cite{abraham} has better capability of detecting these models. We should use different test statistics for different types of alternative models, and a good test statistic should be a feature that has a large probability in the alternative models regardless of the choice of parameters, and has a small probability in Ising models regardless of the choice of parameters. 
For example, Besag's test succeeded in his endive data, but failed for the second-order autologistic regression models, which suggest that $u(x)$ has a larger probability of appearance in the model that the endive data came from than in Ising models and has a similar or smaller probability in the second-order autologistic regression models. And we can similarly interpret that Martin del Campo et al's test \cite{abraham} performs better than Besag's test in our simulations (see Subsection \ref{subsec:simu2}), but not as good as for their alternative models which are special cases of the second-order autologistic regression models when $\beta_1=\beta_2$ and $\beta_3=\beta_4$. 

It is also possible to generalize our method to higher dimensional Ising models, but finding proper marginal distributions in higher dimensional cases will be even bigger challenge than finding them in 2-dimensional case.

\section{Acknowledgment}
We thank Caroline Uhler and Abraham Martin del Campo for useful conversations.
Jing Xi was partially supported by the David and Lucille Packard Foundation.
Seth Sullivant was partially supported by the David and Lucille Packard 
Foundation and the US National Science Foundation (DMS 0954865).

\bibliographystyle{plain}
\bibliography{ising}

\end{document}